\newtheorem{Theorem}{Theorem}
\newtheorem{Corollary}{Corollary}
\newcommand{\R}{\mathbb{R}} 
\renewcommand{\H}{\mathcal{H}} 
\newcommand{\Sp}{\mathcal{S}} 
\newcommand{\E}{\mathrm{E}} 
\renewcommand{\d}{\mathrm{d}} 
\DeclareMathOperator*{\argmax}{arg\,max}
\newcommand{\matsuda}[1]{\textcolor{blue}{(TM: #1)}}
\newcommand{\wk}[1]{\textcolor{green!10!orange!90!}{(WX: #1)}}
\begin{document}

%

%

\title{A Stein Goodness-of-fit Test for Directional Distributions}
\author{Wenkai Xu$^1$ \and Takeru Matsuda$^{2,3}$}
\date{%
    $^1$Gatsby Unit of Computational Neuroscience\\%
    $^2$University of Tokyo\\
    $^3$RIKEN Center for Brain Science\\[2ex]%
}
\maketitle

\begin{abstract}
In many fields, data appears in the form of direction (unit vector) and usual statistical procedures are not applicable to such directional data. 
In this study, we propose non-parametric goodness-of-fit testing procedures for general directional distributions based on kernel Stein discrepancy.
Our method is based on Stein's operator on spheres, which is derived by using Stokes' theorem.
Notably, the proposed method is applicable to distributions with an intractable normalization constant, which commonly appear in directional statistics.
Experimental results demonstrate that the proposed methods control type-I error well and have larger power than existing tests, including the test based on the maximum mean discrepancy.
\end{abstract}

\section{INTRODUCTION}

In many applications, data is obtained in the form of directions and they are naturally identified with a vector on the unit hypersphere $\Sp^{d-1}=\{x  \in \R^d \mid \|x\|= 1\} \subset \R^d$. 
For example, wind direction is represented by a vector on the unit circle $\Sp^1 \subset \mathbb{R}^2$ \cite{genton2007blowing,hering2010powering}, while the protein structure is described by vectors on the unit sphere $\Sp^2 \subset \mathbb{R}^3$ \cite{protein}.
In addition, usual multivariate data in $\mathbb{R}^d$ is transformed to directional data by applying normalization, and such transformation is useful to analyze scale-invariant features.
For example, \cite{banerjee} transformed text document and gene expression data into directional data and applied model-based clustering.
Also, \cite{wang2017normface} showed that projecting face images to a unit hypersphere improves face recognition performance by convolutional neural networks.
Statistical methods for such directional data have been widely studied in the field of directional statistics \cite{mardia99}, and many statistical models of directional distributions have been proposed.
One characteristic feature of directional distributions is that they often involve an intractable normalization constant.
For example, the Fisher-Bingham distribution \cite{kent82} is defined by an unnormalized density
\[
    p(x \mid A,b) \propto \exp(x^{\top} A x + b^{\top} x), \quad x \in \Sp^{d-1},
\]
and its normalization constant is not represented in closed form.
Such intractable normalization constant makes statistical inferences for directional distributions computationally difficult.
While directional data are becoming increasingly important in many applications such as bioinformatics, meteorology, chronobiology, and text/image analysis, to the best of our knowledge,
goodness-of-fit testing for general directional distributions is not well established.

Several studies \cite{chwialkowski2016kernel,liu2016kernelized} have proposed kernel-based goodness-of-fit testing procedures for distributions on $\mathbb{R}^d$. 
These methods employ a model discrepancy measure called kernel Stein discrepancy (KSD), which is based on Stein's method \cite{barbour2005introduction,chen2010} and reproducing kernel Hibert space (RKHS) theory \cite{RKHSbook,muandet2017kernel}. 
Notably, the KSD test is applicable to unnormalized models, because it utilizes only the derivative of the logarithm of the density like score matching \cite{hyvarinen2005estimation}. 
This method is also applicable to model comparison \cite{jitkrittum2018informative,jitkrittum2017linear,kanagawa2019kernel}.
Recently, it has been extended to discrete distributions \cite{yang2018goodness} and point processes \cite{yang2019stein}.
On the other hand, applying Stein's method in the context of manifold structure is previously studied in \cite{barp2018riemannian} focusing on numerical integration problems for scalar functions and in \cite{liu2018riemannian} dealing with Bayesian inference on density functions.

In this study, we develop goodness-of-fit testing procedures for general directional distributions by extending kernel Stein discrepancy.
Our contributions are as follows. 
\begin{itemize}
\item We derive Stein's operator on the unit hypersphere $\Sp^{d-1}$ via Stokes' theorem and introduce directional kernel Stein discrepancy (dKSD).

\item We propose dKSD-based goodness-of-fit testing procedures for general directional distributions including unnormalized ones, which do not require to sample from the null distribution. 

\item We show that the proposed methods control type-I error well and have larger power than existing tests in simulation.
\end{itemize}

\paragraph{Paper Outline} 
We begin our presentation with a brief review of directional distributions and kernel Stein discrepancy on $\R^d$ in Section \ref{sec:background}. 
In Section 3, we derive Stein's operator on $\Sp^{d-1}$.
Then, after proposing directional kernel Stein discrepancy (dKSD) in Section 4, we develop goodness-of-fit testing procedures for directional distributions in Section 5. 
Experiment results are shown in Section \ref{sec:experiment} followed by conclusion in Section \ref{sec:conclusion}.

\section{BACKGROUND}\label{sec:background}

\begin{figure}[ht]
\label{fig:example-dist}
		\subfigure[Uniform]
		{\includegraphics[width=0.33\textwidth]{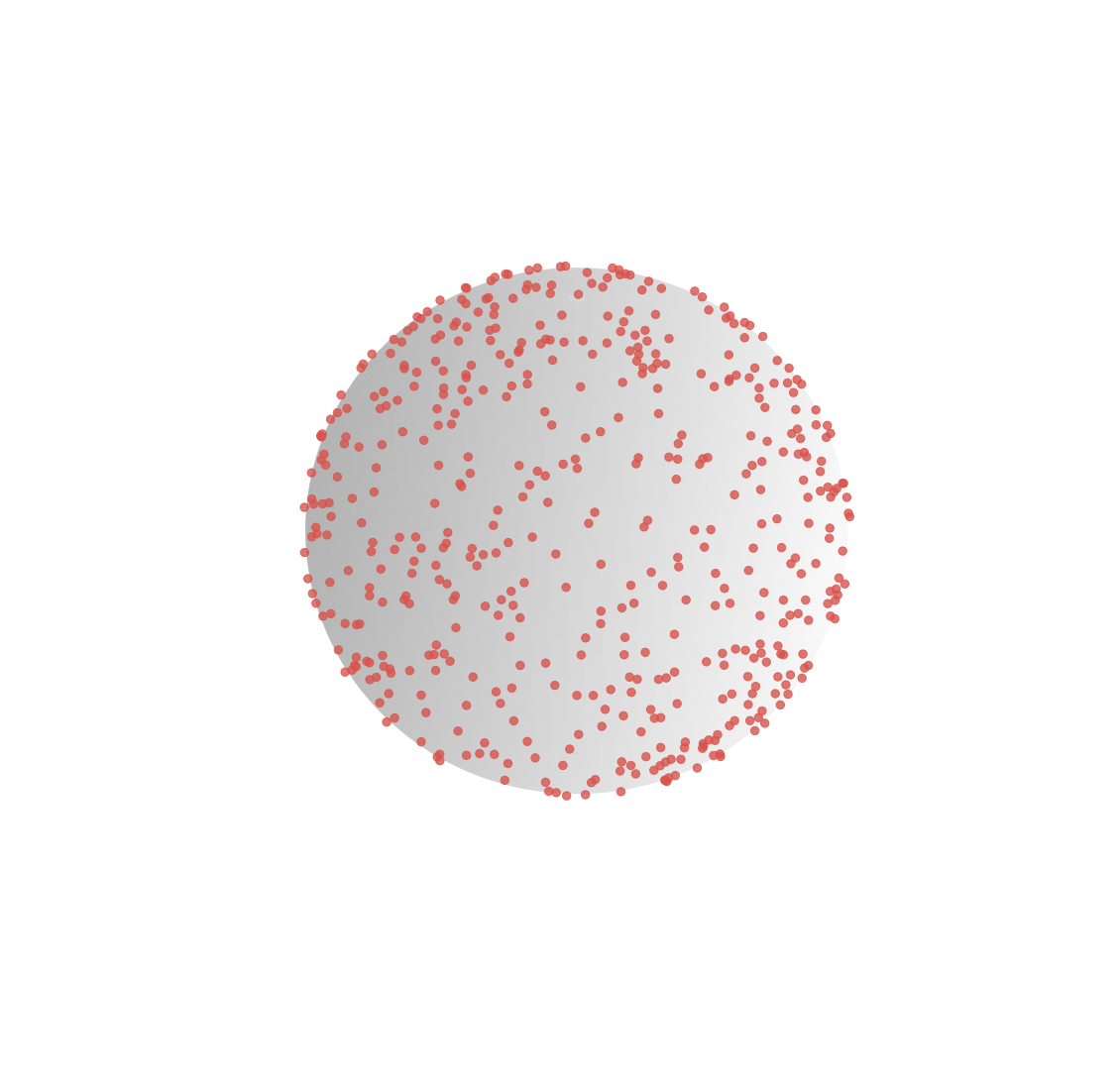}\label{fig:uniform-example}}\subfigure[von Mises-Fisher]
		{\includegraphics[width=0.33\textwidth]{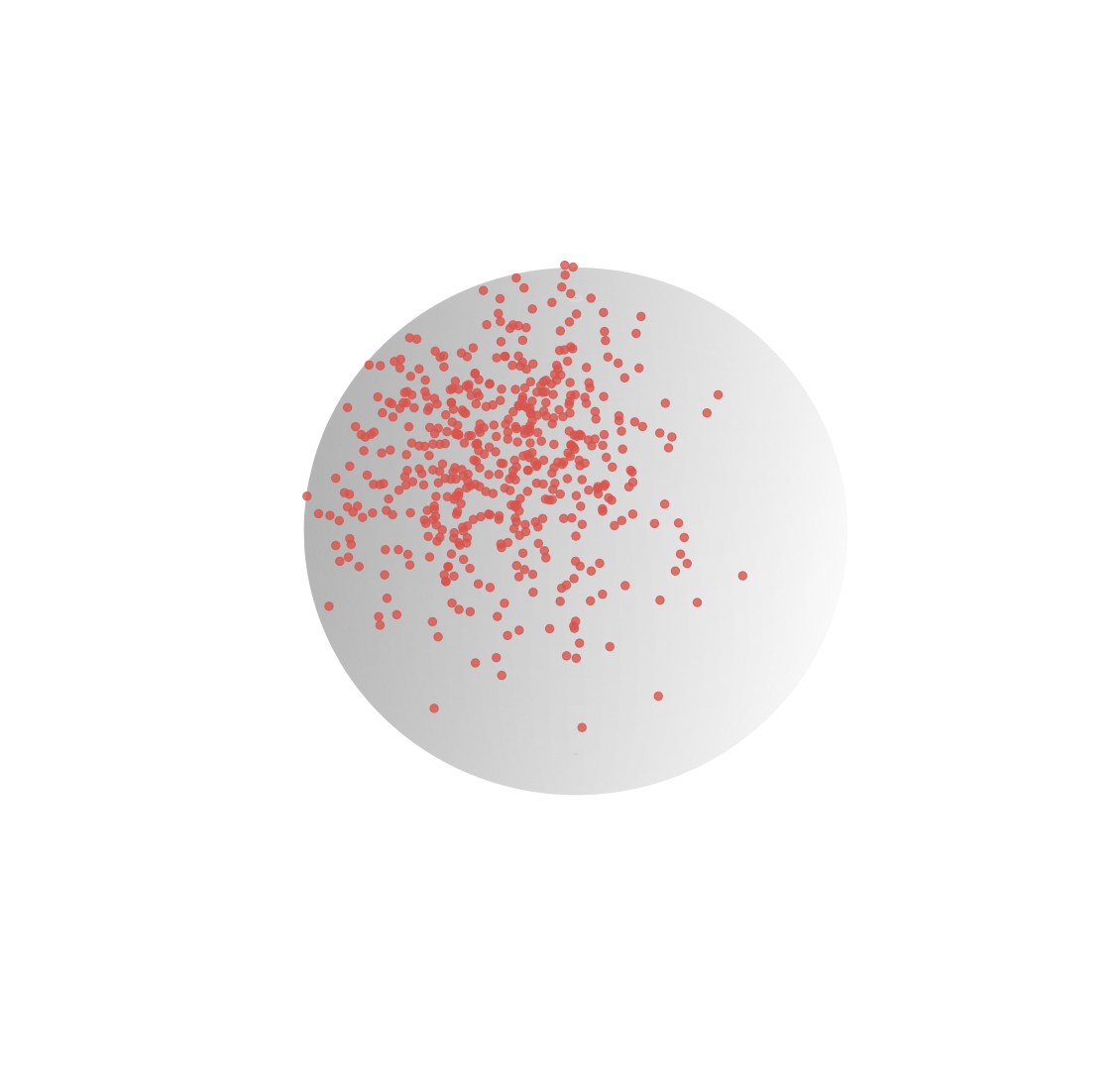}\label{fig:vmf-example}}\subfigure[Fisher-Bingham]
		{\includegraphics[width=0.33\textwidth]{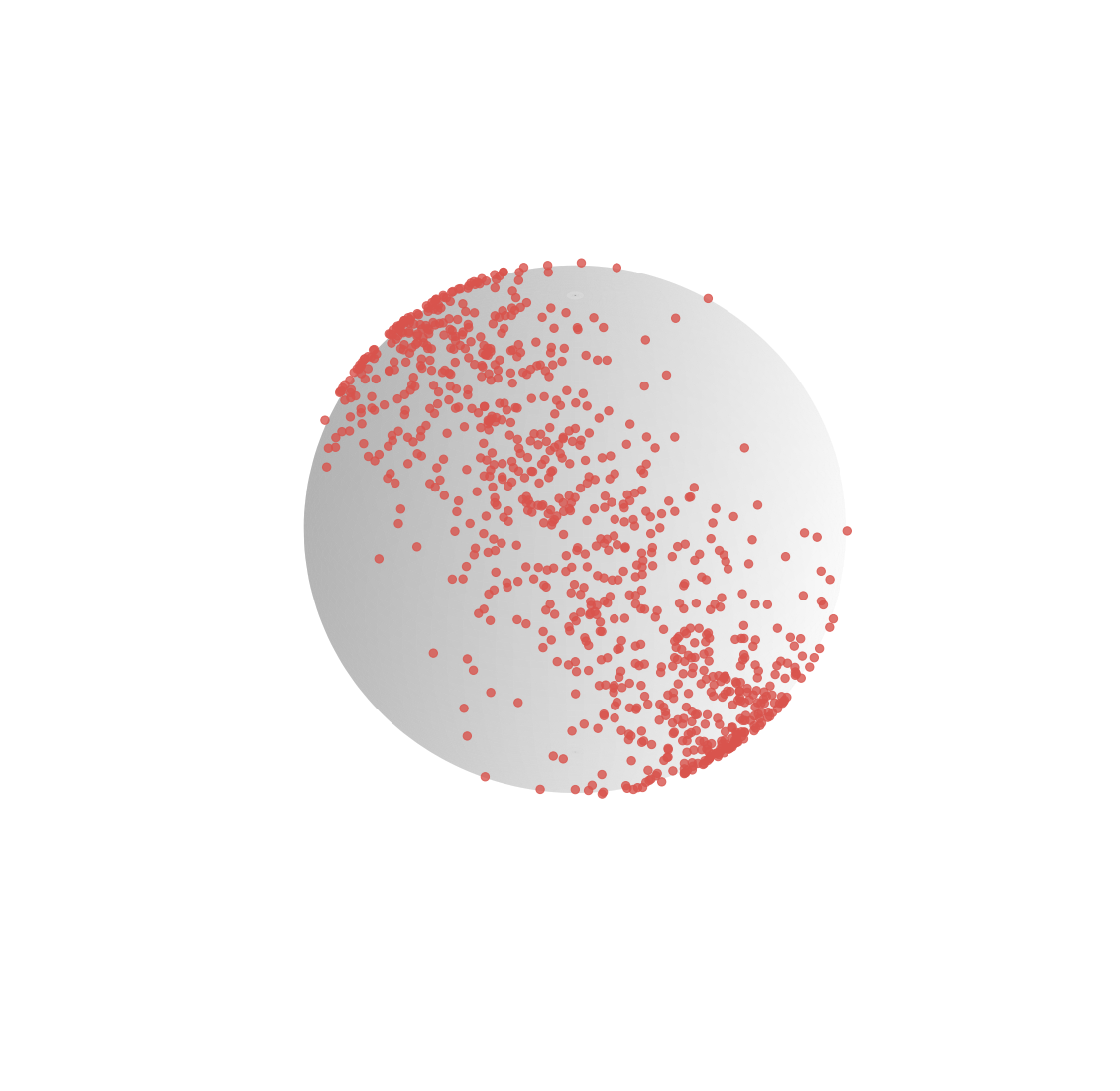}\label{fig:bingham-example}}
		\caption{Samples from three directional distributions on $\Sp^2$}
\end{figure}
\subsection{Directional Distributions}
Several distributions have been proposed for describing directional data on the unit hypersphere $\Sp^{d-1}=\{x  \in \R^d \mid \|x\|= 1\}$.
Here, we present two representative directional distributions: von Mises-Fisher and Fisher-Bingham.
Figure 1 shows samples from these distributions on $\Sp^2$.
See \cite{mardia99} for more detail.

In this paper, we define the probability density of directional distributions by taking the uniform distribution on $\Sp^{d-1}$ as base measure.
Namely, the density of the uniform distribution is $p(x) \equiv 1$.

The von Mises-Fisher (or von Mises when $d=2$) distribution is a directional counterpart of the isotropic Gaussian distribution on $\R^d$. 
Its density is given by
\begin{align}\label{eq:vonMises}
p(x \mid \mu,\kappa) = \frac{1}{C_{d}(\kappa)} \exp(\kappa \mu^{\top} x), 
\end{align}
for $x \in \Sp^{d-1}$, where $\mu \in \Sp^{d-1}$, $\kappa>0$,
\[
C_{d}(\kappa)=\frac{\kappa^{d/2-1}}{(2\pi)^{d/2}I_{d/2-1}(\kappa)},
\]
and $I_{v}$ is the modified Bessel function of the first kind and order $v$.
It is a unimodal distribution with peak at $\mu$ and  degree of concentration specified by $\kappa$.

The Fisher-Bingham (or Kent) distribution is an extension of the von Mises-Fisher distribution  \cite{kent82}.
Its density is given by
\begin{align}\label{eq:Fisher-Bingham}
p(x \mid A,b) = \frac{1}{Z(A,b)} \exp(x^{\top} A x + b^{\top} x), 
\end{align}
for $x \in \Sp^{d-1}$, where $A \in \mathbb{R}^{d \times d}$ is symmetric and $b \in \mathbb{R}^d$.
The normalization constant $Z(A,b)$ is not represented in closed form in general.


The goodness-of-fit test for general {directional distributions} is not well established, to the best of our knowledge. 
Tests for specific distributions such as uniform \cite{figueiredo2007comparison,garcia2018overview,mardia99} and von Mises-Fisher  \cite{figueiredo2012goodness,mardia84} cannot be readily extended to general directional distributions. 
Although \cite{boente2014goodness} proposed testing procedures based on the kernel density estimator, they are difficult to apply to unnormalized models such as the Fisher-Bingham distribution \eqref{eq:Fisher-Bingham}, because they require the normalization constant of the null model to calculate the $L^p$ test statistics. 

\subsection{Kernel Stein Discrepancy on  \texorpdfstring{$\R^d$}{Lg}
}
Here, we briefly review the goodness-of-fit testing with kernel Stein discrepancy on 
$\mathbb{R}^{d}$ by
\cite{chwialkowski2016kernel,liu2016kernelized}, which is inspired from \cite{gorham2015measuring,ley2017stein}.
See \cite{chwialkowski2016kernel,liu2016kernelized} for more detail.

Let $q$ be a smooth probability density on $\mathbb{R}^d$.
For a smooth function $f=(f_1,\dots,f_d):\mathbb{R}^d \to \mathbb{R}^d$, Stein's operator $\mathcal{T}_q$ is defined by 
\begin{align}
\mathcal{T}_q f(x)&=\sum_{i=1}^d \left( f_i(x) \frac{\partial}{\partial x^i} \log q(x) + \frac{\partial}{\partial x^i} f_i(x) \right).
\label{eq:steinRd}
\end{align}
From integration by parts on $\mathbb{R}^d$, we obtain the equality
\[
{\rm E}_q [\mathcal{T}_q f]=0
\]
under mild regularity conditions.
Since Stein's operator $\mathcal{T}_q$ depends on the density $q$ only through the derivatives of $\log q$, it does not involve the normalization constant of $q$, which is a useful property for dealing with unnormalized models \cite{hyvarinen2005estimation}. 

Let $\mathcal{H}$ be a reproducing kernel Hilbert space (RKHS) on $\mathbb{R}^d$ and $\mathcal{H}^d$ be its product.
By using Stein's operator, kernel Stein discrepancy (KSD) \cite{chwialkowski2016kernel,liu2016kernelized} between two densities $p$ and $q$ is defined as 
\begin{equation}
\textnormal{KSD}(p,q) =\sup_{\|f \|_{\mathcal{H}^d} \leq 1} \mathrm{E}_{p}[\mathcal{T}_q f]. 
\label{eq:ksd}
\end{equation}

It is shown that $\mathrm{KSD}(p,q) \geq 0$ and $\mathrm{KSD}(p,q) = 0$ if and only if $p=q$ under mild regularity conditions \cite{chwialkowski2016kernel}.
Thus, KSD is a proper discrepancy measure between densities.
After some calculation, $\mathrm{KSD}(p,q)$ is rewritten as
\begin{align}
\mathrm{KSD}^2(p,q) = {\rm E}_{x,\tilde{x} \sim p} [h_q(x,\tilde{x})], \label{eq:KSDequiv}
\end{align}
where $h_q$ does not involve $p$.

Now, suppose we have samples  $x_1,\dots,x_n$ from unknown density $p$ on $\mathbb{R}^d$.
Based on \eqref{eq:KSDequiv}, estimates of $\mathrm{KSD}^2(p,q)$ are obtained by using U-statistics or V-statistics.
These estimates can be used to test the hypothesis $H_0: p=q$.
The critical value is determined by  bootstrap based on the theory of U-statistics or V-statistics.
In this way, a general method of goodness-of-fit test on $\mathbb{R}^d$ is obtained, which is applicable to unnormalized models as well.

\section{STEIN'S OPERATOR ON \texorpdfstring{$\Sp^{d-1}$}{Lg}} \label{sec:stein-operator}





In this section, we derive Stein's operator for distributions on spheres. 
The derivation is based on Stokes' theorem, which is a fundamental theorem in differential geometry.

\subsection{Differential Forms and Stokes' Theorem}\label{subsec:stokes}
The original derivation of Stein's operator for distributions on $\mathbb{R}^d$ was based on integration by parts, in which the boundary term vanishes due to the decaying property of the probability density.
We need a different argument for spheres because its topology is different from $\mathbb{R}^d$.
Specifically, differential forms and Stokes' theorem are essential to discuss integration by parts on spheres.
Here, we briefly review these concepts.
See \cite{flanders,spivak2018calculus} for more detail and rigorous treatments.


Let $M$ be a $d$-dimensional closed manifold and take its local coordinate system $x^1,\dots,x^d$.
We introduce symbols ${\rm d}x^1,\dots,{\rm d}x^d$ and an associative and anti-symmetric operation $\wedge$ between them called the wedge product: ${\rm d}x^i \wedge {\rm d}x^j = -{\rm d}x^j \wedge {\rm d}x^i$.
Note that ${\rm d}x^i \wedge {\rm d}x^i = 0$.
Then, a $p$-form $\omega$ on $M$ ($0 \leq p \leq d$) is defined as
\[
    \omega = \sum_{i_1 \cdots i_p} f_{i_1 \cdots i_p} {\rm d} x^{i_1} \wedge \dots \wedge {\rm d} x^{i_p},
\]
where the sum is taken over all $p$-tuples $\{i_1, \cdots, i_p \} \subset \{1,\dots,d \}$ and each $f_{i_1 \cdots i_p}$ is a smooth function on $M$.
The exterior derivative ${\rm d} \omega$ of $\omega$ is defined as the $(p+1)$-form given by
\[
    {\rm d} \omega = \sum_{i_1 \cdots i_p} \sum_{i=1}^d \frac{\partial f_{i_1 \cdots i_p}}{\partial x^i} {\rm d} x^i  \wedge {\rm d} x^{i_1} \wedge \dots \wedge {\rm d} x^{i_p}.
\]

For another coordinate system $y^1,\dots,y^d$ on $M$, the differential form is transformed by
\[
    {\rm d} y^j = \sum_{i=1}^d \frac{\partial y^j}{\partial x^i} {\rm d} x^i.
\]

The integration of a $d$-form on a $d$-dimensional manifold is naturally defined like the usual integration on $\mathbb{R}^d$ and invariant with respect to the coordinate selection.
Correspondingly, the integration by parts formula on $\mathbb{R}^d$ is generalized in the form of Stokes' theorem.

\begin{Theorem}[Stokes' theorem]\label{thm:stoke's}
Let $\partial M$ be the boundary of $M$ and $\omega$ be a $(d-1)$-form on $M$.
Then,
\begin{align*}\label{eq:stokes}
\int_M \d\omega=\int_{\partial M}\omega.
\end{align*}
\end{Theorem}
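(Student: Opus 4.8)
The plan is to reduce the global statement to a purely local computation by means of a partition of unity, and then to verify the two local model cases with the fundamental theorem of calculus. Since $M$ is compact, I would first cover it by finitely many positively oriented coordinate charts $(U_\alpha,\varphi_\alpha)$, each $\varphi_\alpha$ mapping $U_\alpha$ diffeomorphically either onto an open subset of $\R^d$ (an \emph{interior} chart) or onto an open subset of the closed half-space $\mathbb{H}^d=\{x\in\R^d \mid x^d\le 0\}$ (a \emph{boundary} chart), with $\partial M$ corresponding to $\{x^d=0\}$. Choosing a smooth partition of unity $\{\rho_\alpha\}$ subordinate to this cover, I would write $\omega=\sum_\alpha \rho_\alpha\omega$. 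Because $\d$ is linear and both sides of the claimed identity are linear in $\omega$, it suffices to prove the theorem for each compactly supported summand $\rho_\alpha\omega$ separately; summing over $\alpha$ and using $\sum_\alpha\rho_\alpha\equiv 1$ then recovers the general case. This step also fixes the orientation conventions — in particular the \emph{induced orientation} on $\partial M$ — that make the two integrals comparable.

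Second, working in a single chart I would pull everything back to $\R^d$ (or $\mathbb{H}^d$) and expand the $(d-1)$-form as $\omega=\sum_{i=1}^d f_i\,\d x^1\wedge\cdots\wedge\widehat{\d x^i}\wedge\cdots\wedge\d x^d$, where the hat denotes omission and each $f_i$ has compact support. A direct application of the definition of the exterior derivative gives $\d\omega=\bigl(\sum_{i=1}^d(-1)^{i-1}\,\partial f_i/\partial x^i\bigr)\,\d x^1\wedge\cdots\wedge\d x^d$, since every other wedge term contains a repeated factor $\d x^i\wedge\d x^i$ and so vanishes. Integrating each summand by Fubini's theorem and then applying the fundamental theorem of calculus in the $x^i$ direction, the compact support annihilates every interior chart and every direction $i<d$; only the $x^d$-integral over a boundary chart survives, evaluating $f_d$ at $x^d=0$, which is precisely $\int_{\partial M}\omega$ once the induced orientation is matched.

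I expect the main obstacle to be bookkeeping rather than analysis: getting the orientation sign on $\partial M$ to agree with the sign $(-1)^{i-1}$ produced by the exterior derivative, and verifying that the local identities glue consistently across overlapping charts, i.e.\ that the transformation rule $\d y^j=\sum_i(\partial y^j/\partial x^i)\,\d x^i$ for forms is compatible with the partition of unity. The analytic content is only the one-variable fundamental theorem of calculus; the real care lies in the orientation conventions and in confirming that compact support justifies discarding all boundary terms except the one along $\{x^d=0\}$.
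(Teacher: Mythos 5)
Your proposal is correct and is essentially the classical proof of Stokes' theorem — partition of unity subordinate to interior and boundary half-space charts, reduction to compactly supported local pieces, and the one-variable fundamental theorem of calculus, with the only delicate point being the induced orientation on $\partial M$. The paper does not prove this statement itself but cites standard references (Flanders; Spivak, \emph{Calculus on Manifolds}) whose proofs follow exactly this route, so your argument matches the intended one; note also that for the paper's actual use case ($M=\Sp^{d-1}$, a closed manifold) only the interior-chart half of your argument is needed, since there are no boundary charts and both sides reduce to $\int_M \d\omega=0$.
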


In particular, since $\partial \Sp^{d-1}$ is empty, we obtain the following.

\begin{Corollary}\label{cor:stokes}
Let $\omega$ be a $(d-2)$-form on $\Sp^{d-1}$.
Then,
\begin{align}
\int_{S^{d-1}} \d\omega=0.
\end{align}
\end{Corollary}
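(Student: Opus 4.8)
The plan is to obtain this directly from Stokes' theorem (Theorem~\ref{thm:stoke's}), so the only real work is keeping the dimension bookkeeping straight. The statement of Stokes' theorem is phrased for a $d$-dimensional manifold $M$ paired with a $(d-1)$-form. Here the relevant manifold is $\Sp^{d-1}$, which as a submanifold of $\R^d$ has intrinsic dimension $d-1$, not $d$. So I would first emphasize that one should apply the theorem with the ambient dimension parameter set to $d-1$: taking $M = \Sp^{d-1}$, a $(d-1)$-dimensional manifold, the top-degree forms are the $(d-1)$-forms, and the natural input to Stokes' theorem is a $(d-2)$-form.

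Next I would check that the hypotheses line up. Since $\omega$ is a $(d-2)$-form on the $(d-1)$-dimensional manifold $\Sp^{d-1}$, its exterior derivative $\d\omega$ is a $(d-1)$-form, i.e.\ a top-degree form, and hence can be integrated over $\Sp^{d-1}$ in the sense described in the preceding paragraphs. This is exactly the situation covered by Theorem~\ref{thm:stoke's} with $M = \Sp^{d-1}$, giving
\[
\int_{\Sp^{d-1}} \d\omega = \int_{\partial \Sp^{d-1}} \omega.
\]

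Finally I would invoke the key geometric fact that the sphere is a closed manifold: $\Sp^{d-1}$ is compact and has no boundary, so $\partial \Sp^{d-1} = \emptyset$. An integral of any form over the empty set is zero, hence the right-hand side vanishes and we conclude $\int_{\Sp^{d-1}} \d\omega = 0$, as claimed.

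There is essentially no analytic obstacle here; the result is an immediate specialization of Stokes' theorem. The one point that deserves care, and the only place an error could creep in, is the dimension shift: because $\Sp^{d-1}$ sits in $\R^d$ but is itself $(d-1)$-dimensional, it is tempting to pair it with a $(d-1)$-form as in the verbatim statement of the theorem. Making explicit that the $(d-2)$-form $\omega$ plays the role of the ``$(d-1)$-form'' in the theorem (after replacing $d$ by $d-1$) is the step I would state most carefully, so that the emptiness of $\partial \Sp^{d-1}$ can be applied cleanly.
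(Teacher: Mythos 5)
Your proof is correct and is exactly the paper's argument: apply Stokes' theorem with $M=\Sp^{d-1}$ (so the top degree is $d-1$ and the input is a $(d-2)$-form) and use that $\partial\Sp^{d-1}=\emptyset$. The dimension bookkeeping you emphasize is the right point of care, and nothing further is needed.
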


Corollary \ref{cor:stokes} plays an important role in the derivation of Stein's operator on $\Sp^{d-1}$.



\subsection{Spherical Coordinate System}
In this paper, we use the spherical coordinate system $\theta=(\theta^1,\dots,\theta^{d-1})$ on $\Sp^{d-1}$ defined by
\begin{align}\label{eq:spherical}
\begin{pmatrix}\theta^1 \\ \theta^2 \\ \theta^3 \\ \vdots \\ \theta^{d-1}\end{pmatrix} \mapsto \begin{pmatrix}\cos \theta^{1}\\ \sin \theta^{1} \cos \theta^{2}\\ \sin \theta^{1} \sin \theta^{2} \cos \theta^3 \\ \vdots \\
\sin \theta^{1} \cdots \sin \theta^{d-1} \end{pmatrix} \in \Sp^{d-1}, 
\end{align}
where $(\theta^{1},\dots,\theta^{d-2}) \in [0, \pi)^{d-2}$ and  $\theta^{d-1} \in [0, 2\pi)$. 
In this coordinate system, the volume element \cite{flanders} is given by
\[
    {\rm d}S=J(\theta^1,\dots,\theta^{d-1}) {\rm d} \theta^1 \wedge \cdots \wedge {\rm d} \theta^{d-1},
\]
where
\[
    J(\theta^1,\dots,\theta^{d-1})=\sin^{d-2} (\theta^1) \sin^{d-3} (\theta^2) \cdots \sin (\theta^{d-2}).
\]
Note that $J(\theta^1)=1$ when $d=2$.
Since the surface area of $\Sp^{d-1}$ is $S_{d-1}=2\pi^{d/2}/\Gamma(d/2)$, the uniform distribution on $\Sp^{d-1}$ corresponds to the $(d-1)$-form $\eta$ on $\Sp^{d-1}$ given by
\[
    \eta = \frac{1}{S_{d-1}} J(\theta^1,\dots,\theta^{d-1}) {\rm d} \theta^1 \wedge \cdots \wedge {\rm d} \theta^{d-1}.
\]
By using this, the directional distribution on $\Sp^{d-1}$ with density $p$ is represented by the $(d-1)$-form $\omega$ given by
\[
    \omega = p \eta.
\]
Thus, expectation of a function $g$ with respect to $p$ is obtained by
\begin{align*}
    \mathrm{E}_p [g] &= \int_{\Sp^{d-1}} g \omega \\
    &= \frac{1}{S_{d-1}} \int_0^{2\pi} \int_0^{\pi} \cdots \int_0^{\pi} g(\theta) p(\theta) J(\theta) {\rm d} \theta^1 \cdots {\rm d} \theta^{d-1}.
\end{align*}

\subsection{Stein's Operator on \texorpdfstring{$\Sp^{d-1}$}{Lg}}
Now, we derive Stein's operator on $\Sp^{d-1}$ in the spherical coordinate.

\begin{Theorem}[Stein's operator on $\Sp^{d-1}$]\label{thm:directional-stein}
Let $p$ be a smooth probability density on $\Sp^{d-1}$.
For smooth functions $f_1,\dots,f_{d-1}:\Sp^{d-1} \to \mathbb{R}$, define a function $\mathcal{A}_{p}f:\Sp^{d-1} \to \mathbb{R}$ by
\begin{equation}\label{eq:directional-stein}
\mathcal{A}_{p}f = \sum_{i=1}^{d-1} \left( \frac{\partial f_i}{\partial {\theta}^i} + f_i \frac{\partial}{\partial {\theta}^i} \log (pJ) \right).
\end{equation}
Then,
\begin{align*}
{\rm E}_p [\mathcal{A}_{p} f] = 0.
\end{align*}
\end{Theorem}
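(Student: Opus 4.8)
The plan is to apply Corollary \ref{cor:stokes}: if I can exhibit a $(d-2)$-form $\alpha$ on $\Sp^{d-1}$ whose exterior derivative equals $(\mathcal{A}_p f)\,\omega$ up to the overall constant $S_{d-1}$, then the vanishing $\int_{\Sp^{d-1}}\d\alpha = 0$ immediately yields $\E_p[\mathcal{A}_p f]=0$. The role played by the decaying boundary term in the $\R^d$ derivation by integration by parts is thus taken over here by the emptiness of $\partial\Sp^{d-1}$.

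The natural candidate is
\[
    \alpha = \sum_{i=1}^{d-1} (-1)^{i-1}\, f_i\, p\, J\; \d\theta^1 \wedge \cdots \wedge \widehat{\d\theta^i} \wedge \cdots \wedge \d\theta^{d-1},
\]
where $\widehat{\d\theta^i}$ indicates that the factor $\d\theta^i$ is omitted, so that each summand is indeed a $(d-2)$-form. I would compute $\d\alpha$ term by term. When differentiating the $i$-th summand, only the partial derivative in the direction $\theta^i$ survives: every other direction reintroduces a repeated $\d\theta^j$ and vanishes by anti-symmetry of the wedge product. Moving the resulting $\d\theta^i$ back into its canonical slot costs a sign $(-1)^{i-1}$, which cancels the prefactor $(-1)^{i-1}$ built into $\alpha$, leaving
\[
    \d\alpha = \sum_{i=1}^{d-1} \frac{\partial (f_i\, p\, J)}{\partial\theta^i}\; \d\theta^1 \wedge \cdots \wedge \d\theta^{d-1}.
\]

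It then remains to recognize the right-hand side. Expanding by the product rule gives $\partial_{\theta^i}(f_i p J) = pJ\bigl(\partial_{\theta^i} f_i + f_i\, \partial_{\theta^i}\log(pJ)\bigr)$, and summing over $i$ produces exactly $pJ\cdot(\mathcal{A}_p f)$. Recalling $\omega = p\eta$ and $\eta = S_{d-1}^{-1} J\,\d\theta^1\wedge\cdots\wedge\d\theta^{d-1}$, this is $\d\alpha = S_{d-1}\,(\mathcal{A}_p f)\,\omega$. Since $\alpha$ is a $(d-2)$-form, Corollary \ref{cor:stokes} gives $\int_{\Sp^{d-1}}\d\alpha = 0$; dividing by $S_{d-1}$ and identifying the integral with $\E_p[\mathcal{A}_p f]$ closes the argument.

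The one point that needs care is that the entire construction lives in the spherical chart \eqref{eq:spherical}, which degenerates at the poles and along the coordinate seam. I would check that $\alpha$ extends to a well-defined global form with no hidden boundary contribution, using the smoothness of $f_i$ and $p$ together with the vanishing of the Jacobian factor $J$ on the degenerate set. Beyond this, the only delicate bookkeeping is tracking the interlocking signs $(-1)^{i-1}$ in the exterior derivative, which I expect to cancel cleanly as indicated above.
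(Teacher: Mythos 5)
Your proof is correct and follows essentially the same route as the paper: both construct a $(d-2)$-form equal to $pJ$ times a combination of the $f_i$ and the omitted-coordinate wedge products, compute its exterior derivative to recover $pJ\,(\mathcal{A}_p f)$ times the volume form, and invoke Corollary \ref{cor:stokes}. The only differences are notational (your explicit $(-1)^{i-1}$ signs with the hat convention versus the paper's cyclically ordered $\d\theta^{(-i)}$) and your added — and reasonable — remark about verifying that the form is globally well defined despite the degeneracy of the spherical chart, a point the paper passes over silently.
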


\begin{proof}
Let ${\rm d}\theta^{(-i)}={\rm d}\theta^{i+1} \wedge \cdots \wedge {\rm d}\theta^{d-1} \wedge {\rm d}\theta^{1} \cdots \wedge {\rm d}\theta^{i-1}$ be a $(d-2)$-form on $\Sp^{d-1}$ for $i=1,\dots,\textnormal{d-1}$.
Consider a $(d-2)$-form $\omega$ on $\Sp^{d-1}$ defined by
\[
\omega = \sum_{i=1}^{{d-1}} f_i {\rm d} {\theta}^{(-i)}. 
\]
Then,
\begin{align*}
{\rm d}(pJ\omega) &= \sum_{i=1}^{d-1} \left( f_i \frac{\partial}{\partial {\theta}^i} (pJ) + pJ \frac{\partial f_i}{\partial {\theta}^i} \right) {\rm d} \theta^1 \wedge \cdots \wedge {\rm d} \theta^{{d-1}} \\
&=  (p J \mathcal{A}_{p} f) {\rm d} \theta^1 \wedge \cdots \wedge {\rm d} \theta^{{d-1}}. 
\end{align*}
From Corollary 1,
$
{\rm E}_p [\mathcal{A}_{p} f] = \int_{\Sp^{d-1}} {\rm d} (p J \omega) = 0.$
\end{proof}
Although Stein's operator on $\Sp^{d-1}$ has a similar form to the original Stein's operator on $\R^{d}$ in \eqref{eq:steinRd}, 
its derivation is different from the original one due to the topology of spheres.
Whereas the original derivation on $\R^{d}$ required {vanishing density} at the boundary, our derivation on $\Sp^{d-1}$ is free from such assumption.
Also note that, although we use the spherical coordinate system in this paper, we can derive Stein's operator in other coordinate systems as well.

\section{KERNEL STEIN DISCREPANCY ON \texorpdfstring{$\Sp^{d-1}$}{Lg}}\label{sec:dksd}
Based on Stein's operator on $\Sp^{d-1}$ in \eqref{eq:directional-stein}, we define the Stein discrepancy and its kernelized counterpart between two directional distributions via kernel mean embeddings, similar to \cite{chwialkowski2016kernel, liu2016kernelized}, which we call the directional kernel Stein discrepancy.


Let $\H$ be an RKHS on $\Sp^{d-1}$ with reproducing kernel $k$ and let $\H^{d-1}$ be its product. 
We define the directional kernel Stein discrepancy \textnormal{(dKSD)} by

\begin{equation}\label{eq:dksd}
\textnormal{dKSD}(p,q)=\sup_{\|f\|_{\mathcal{H}^{ d-1}} \leq 1} \mathbb{E}_{p} [\mathcal{A}_{q}f]
\end{equation}

Let $x$ and $\tilde{x}$ be points on $\Sp^{d-1}$ with spherical coordinates $\theta$ and $\tilde{\theta}$, respectively.
We identify the kernel function $k(x,\tilde{x})$ with a function of $\theta$ and $\tilde{\theta}$ through \eqref{eq:spherical} and take its derivatives.
For example, when $d=2$ and $k(x,\tilde{x})=\exp (\kappa x^{\top} \tilde{x})=\exp (\kappa \cos(\theta-\tilde{\theta}))$, we have
\[
\frac{\partial^{2}}{\partial \theta \partial \tilde{\theta}}k(x,\tilde{x})=\kappa (\cos(\theta-\tilde{\theta})-\kappa \sin^2(\theta-\tilde{\theta})) \exp(\kappa \cos(\theta-\tilde{\theta})).
\]
Let
\begin{align*}
    h_q(x,\tilde{x}) &= k(x,\tilde{x}) \sum_{i=1}^{d-1} \frac{\partial}{\partial \theta^i} \log (q(\theta)J(\theta)) \frac{\partial}{\partial \tilde{\theta}^i} \log (q(\tilde{\theta})J(\tilde{\theta}))\\
    &+ \sum_{i=1}^{d-1} \frac{\partial}{\partial \theta^i} \log (q(\theta)J(\theta)) \frac{\partial}{\partial \tilde{\theta}^i} k(x,\tilde{x}) \\
    &+ \sum_{i=1}^{d-1} \frac{\partial}{\partial \tilde{\theta}^i} \log (q(\tilde{\theta})J(\tilde{\theta})) \frac{\partial}{\partial {\theta}^i} k(x,\tilde{x}) \\
    &+ \sum_{i=1}^{d-1} \frac{\partial^2}{\partial {\theta}^i \partial \tilde{\theta}^i} k(x,\tilde{x}).
\end{align*}
Similarly to the original KSD \eqref{eq:KSDequiv}, dKSD is rewritten as follows.

\begin{Theorem}\label{th:dKSD-form}
Assume $p$ and $q$ are smooth densities on $\Sp^{d-1}$ and the reproducing kernel $k$ of $\mathcal{H}$ is a smooth function on $\Sp^{d-1} \times \Sp^{d-1}$. 
Then,
\begin{equation}\label{eq:dKSD-population}
\textnormal{dKSD}^2(p,q)=\mathbb{E}_{x,\tilde{x}\sim p}[h_{q}(x,\tilde{x})].
\end{equation}
\end{Theorem}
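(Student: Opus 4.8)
The plan is to run the standard RKHS argument behind kernel Stein discrepancy, transplanted to the spherical chart, where the role of the drift $\partial_{\theta^i}\log q$ on $\R^d$ is played by $\partial_{\theta^i}\log(qJ)$. The central object is, for each coordinate $i$ and each point $x\in\Sp^{d-1}$ with spherical coordinate $\theta$, the Stein feature element of $\H$
\[
\xi_{q,i}(\cdot\,;\theta)
= \frac{\partial}{\partial\theta^i} k(\cdot, x)
+ \left( \frac{\partial}{\partial\theta^i} \log\bigl(q(\theta)J(\theta)\bigr) \right) k(\cdot, x),
\]
where $k(\cdot,x)$ is the kernel section at $x$ and $\frac{\partial}{\partial\theta^i}k(\cdot,x)$ is its derivative in the coordinate $\theta^i$ of $x$. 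First I would invoke the derivative reproducing property: since $k$ is smooth, $\xi_{q,i}(\cdot\,;\theta)\in\H$ and, for every $f_i\in\H$,
\[
\langle f_i, \xi_{q,i}(\cdot\,;\theta) \rangle_{\H}
= \frac{\partial f_i}{\partial\theta^i}(\theta)
+ f_i(\theta)\, \frac{\partial}{\partial\theta^i} \log\bigl(q(\theta)J(\theta)\bigr).
\]
Summing over $i$ shows, pointwise in $\theta$, that $\mathcal{A}_q f(\theta) = \sum_{i=1}^{d-1} \langle f_i, \xi_{q,i}(\cdot\,;\theta) \rangle_{\H}$, i.e.\ $\mathcal{A}_q f$ is the coordinatewise inner product of $f$ with the Stein feature.

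Second I would take the expectation under $p$ and pull it inside the inner product. Because $\Sp^{d-1}$ is compact and $p,q,k$ are smooth, every function appearing above is bounded, so $\theta\mapsto\xi_{q,i}(\cdot\,;\theta)$ is Bochner-integrable in $\H$ and the mean embedding $\bar\xi_{q,i} = \mathbb{E}_{\theta\sim p}[\xi_{q,i}(\cdot\,;\theta)]$ is a well-defined element of $\H$. Then
\[
\mathbb{E}_p[\mathcal{A}_q f]
= \sum_{i=1}^{d-1} \langle f_i, \bar\xi_{q,i} \rangle_{\H}
= \langle f, \bar\xi_q \rangle_{\H^{d-1}},
\qquad \bar\xi_q = (\bar\xi_{q,1}, \dots, \bar\xi_{q,d-1}).
\]
The supremum over $\|f\|_{\H^{d-1}}\le 1$ is now a direct application of Cauchy--Schwarz (attained at $f=\bar\xi_q/\|\bar\xi_q\|_{\H^{d-1}}$), so $\textnormal{dKSD}(p,q) = \|\bar\xi_q\|_{\H^{d-1}}$ and hence $\textnormal{dKSD}^2(p,q) = \sum_{i=1}^{d-1} \|\bar\xi_{q,i}\|_{\H}^2$.

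Finally I would expand each squared norm as a double expectation. Using Bochner integrability again to interchange both expectations with the inner product gives
\[
\textnormal{dKSD}^2(p,q)
= \mathbb{E}_{x,\tilde{x}\sim p}\!\left[ \sum_{i=1}^{d-1} \langle \xi_{q,i}(\cdot\,;\theta), \xi_{q,i}(\cdot\,;\tilde\theta) \rangle_{\H} \right].
\]
Expanding the inner product by bilinearity produces four terms, each evaluated with the reproducing identity $\langle k(\cdot,x), k(\cdot,\tilde x) \rangle_{\H} = k(x,\tilde x)$ and its derivative analogues $\langle \frac{\partial}{\partial\theta^i} k(\cdot,x), k(\cdot,\tilde x) \rangle_{\H} = \frac{\partial}{\partial\theta^i} k(x,\tilde x)$, the mirror identity in $\tilde\theta$, and $\langle \frac{\partial}{\partial\theta^i} k(\cdot,x), \frac{\partial}{\partial\tilde\theta^i} k(\cdot,\tilde x) \rangle_{\H} = \frac{\partial^2}{\partial\theta^i\,\partial\tilde\theta^i} k(x,\tilde x)$. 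Matching the four resulting terms against the definition of $h_q$ closes the proof.

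The hard part will be the rigorous justification of the derivative reproducing property in the chart together with the two interchanges of expectation and inner product. Unlike the $\R^d$ case, no decay-at-infinity hypothesis is needed: compactness of $\Sp^{d-1}$ and smoothness of $p,q,k$ force all integrands to be bounded, so Bochner integrability is automatic. The only genuine care required is regularity of the kernel --- one needs $k\in C^2$ (in the chart) so that $\frac{\partial}{\partial\theta^i} k(\cdot,x)$ and $\frac{\partial^2}{\partial\theta^i\,\partial\tilde\theta^i} k(\cdot,\tilde x)$ lie in $\H$ and reproduce the corresponding derivatives of functions in $\H$, which is exactly the smoothness assumed in the statement.
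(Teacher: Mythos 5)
Your proposal is correct and follows essentially the same route as the paper: the paper's proof obtains a representer $g=(g_1,\dots,g_{d-1})$ via the Riesz representation theorem with $g_i(x)=\mathrm{E}_{\tilde{x}\sim p}[k(x,\tilde{x})\,\partial_{\tilde{\theta}^i}\log(q(\tilde{\theta})J(\tilde{\theta}))+\partial_{\tilde{\theta}^i}k(x,\tilde{x})]$, which is exactly your mean-embedded Stein feature $\bar\xi_{q,i}$, then attains the supremum at $f=g/\|g\|_{\H^{d-1}}$ and expands $\|g\|^2_{\H^{d-1}}$ into the double expectation of $h_q$. Your version is somewhat more careful than the paper's (explicitly justifying boundedness of the functional via Bochner integrability on the compact sphere and the derivative reproducing property, where the paper simply invokes Riesz and says ``after straightforward calculations''), but the underlying argument is the same.
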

\begin{proof}
Since Stein's operator $\mathcal{A}_q$ is linear from \eqref{eq:directional-stein}, ${\rm E}_p[\mathcal{A}_q f]$ is a linear functional of $f \in \mathcal{H}^{d-1}$.
Then, from Riesz representation theorem, there uniquely exists $g=(g_1,\dots,g_{d-1}) \in \mathcal{H}^{d-1}$ such that ${\rm E}_p[\mathcal{A}_q f] = (f,g)_{\mathcal{H}^{d-1}}$.
By using the reproducing property of $\mathcal{H}$, we obtain
\begin{align}
    g_i(x) = {\rm E}_{\tilde{x} \sim p} \left[ k(x,\tilde{x}) \frac{\partial}{\partial \tilde{\theta}^i} \log (q(\tilde{\theta})J(\tilde{\theta})) + \frac{\partial}{\partial \tilde{\theta}^i} k(x,\tilde{x}) \right], \label{eq:def_g}
\end{align}
for $i=1,\dots,\textnormal{d-1}$.
Thus, the maximization in \eqref{eq:dksd} is attained by $f=g/\|g\|_{\mathcal{H}^{d-1}}$ and ${\rm dKSD}(p,q)=\| g \|_{\mathcal{H}^{d-1}}$.
Therefore, after straightforward calculations, we obtain \eqref{eq:dKSD-population}.
\end{proof}

Importantly, the function $h_q$ in \eqref{eq:dKSD-population} does not involve $p$.
Therefore, we can estimate $\textnormal{dKSD}^2(p,q)$ based on samples from $p$ and apply it to goodness-of-fit testing.

From the following theorem, $\textnormal{dKSD}^2(p,q)$ provides a proper discrepancy measure between directional distributions.
Let
\[
    L_i(x)=\frac{\partial}{\partial \theta^i} \log \frac{q(\theta)}{p(\theta)}, \quad i=1,\dots,\textnormal{d-1}.
\]

\begin{Theorem}\label{thm:characteristic}
Let $p$ and $q$ be smooth densities on $\Sp^{d-1}$. 
Assume the following:
\begin{itemize}
    \item The kernel $k$ is $C_0$-universal \textnormal{\cite[Definition 4.1]{carmeli2010vector}}.
    \item $\E_{x,\tilde{x} \sim p} h_p(x,\tilde{x}) < \infty$. 
    \item $\E_{p} \| L(x) \|^2 < \infty$. 
\end{itemize}
Then, $\textnormal{dKSD}^2(p,q)\geq 0$ and $\textnormal{dKSD}^2(p,q)=0$ if and only if $p=q$.
\end{Theorem}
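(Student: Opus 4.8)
The plan is to leverage the variational identity established in Theorem~\ref{th:dKSD-form}, namely $\textnormal{dKSD}(p,q)=\|g\|_{\mathcal{H}^{d-1}}$ with the witness $g=(g_1,\dots,g_{d-1})$ given by \eqref{eq:def_g}. Non-negativity is then immediate. For the easy direction, note that setting $q=p$ makes every $L_i=\partial_{\theta^i}\log(q/p)$ vanish, or equivalently Stein's identity (Theorem~\ref{thm:directional-stein}) gives $\mathbb{E}_p[\mathcal{A}_p f]=0$ for all $f$, so the supremum in \eqref{eq:dksd} is zero. The content lies in the converse.

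First I would recast the witness function so that it directly measures the gap between $p$ and $q$. Subtracting the two Stein operators, the Jacobian cancels and one obtains the pointwise identity $\mathcal{A}_q f-\mathcal{A}_p f=\sum_{i=1}^{d-1} f_i\,\partial_{\theta^i}\log(q/p)=\sum_{i=1}^{d-1} f_i L_i$. Applying Stein's identity to the coordinate vector field that places $k(x,\cdot)$ in the $i$-th slot then converts \eqref{eq:def_g} into $g_i(x)=\mathbb{E}_{\tilde{x}\sim p}[\,k(x,\tilde{x})\,L_i(\tilde{x})\,]$; that is, $g_i$ is precisely the kernel mean embedding $\int k(x,\cdot)\,\mathrm{d}\mu_i$ of the signed measure $\mathrm{d}\mu_i=L_i\,\mathrm{d}p$. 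The hypotheses $\mathbb{E}_p\|L\|^2<\infty$ and $\mathbb{E}_{x,\tilde{x}}h_p<\infty$ are used here: the former (with Cauchy--Schwarz) makes each $L_i\in L^1(p)$, so $\mu_i$ is a \emph{finite} signed measure, and both ensure the integral defining $g_i$ converges in $\mathcal{H}$ and that the interchange of expectation and inner product, as well as the use of Stein's identity, is legitimate.

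Given this reformulation, the vanishing $\textnormal{dKSD}^2(p,q)=\sum_i\|g_i\|_{\mathcal{H}}^2=0$ forces $g_i=0$ in $\mathcal{H}$ for each $i$, i.e. the mean embedding of each finite signed measure $\mu_i$ is zero. I would then invoke $C_0$-universality of $k$: by the characterization in \cite[Definition~4.1]{carmeli2010vector}, the embedding map is injective on finite signed Borel measures, so $\mu_i=0$, equivalently $L_i(x)\,p(x)=0$ for all $x$. As $p$ is a strictly positive smooth density, $L_i\equiv0$, i.e. $\partial_{\theta^i}\log(q/p)=0$ for every $i$. Because $\Sp^{d-1}$ is connected and the coordinate fields $\partial/\partial\theta^i$ span each tangent space away from the poles, $\log(q/p)$ is constant, and smoothness extends this across the coordinate singularities; as $p$ and $q$ both integrate to one, the constant is zero and $p=q$.

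I expect the main obstacle to be the passage from $g_i=0$ in $\mathcal{H}$ to $\mu_i=0$ and thence to $L_i\equiv0$. This demands (i) verifying carefully that $g_i$ genuinely equals the mean embedding of a \emph{finite} signed measure, which is exactly where the integrability assumptions and the Stein-identity rewriting do their work; (ii) applying the correct form of $C_0$-universality, namely that such kernels separate finite signed measures rather than merely probability measures; and (iii) the tacit use of strict positivity of $p$ in concluding $L_i=0$ from $L_i p=0$. By contrast, the cancellation of $J$, the invocation of Stein's identity, and the final connectedness argument are routine.
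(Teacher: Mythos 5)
Your proposal is correct and follows essentially the same route as the paper: represent $\textnormal{dKSD}^2(p,q)=\|g\|^2_{\mathcal{H}^{d-1}}$, use Stein's identity to rewrite $g_i(x)=\mathrm{E}_{\tilde{x}\sim p}[L_i(\tilde{x})k(x,\tilde{x})]$, invoke $C_0$-universality to get $L_i\equiv 0$, and conclude $\log(q/p)$ is constant hence $p=q$. Your additional care about the finite-signed-measure interpretation, strict positivity of $p$, and the connectedness argument are sensible elaborations of steps the paper leaves implicit, not a different method.
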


\begin{proof}
From the proof of Theorem \ref{th:dKSD-form}, we have ${\rm dKSD}^2(p,q)=\|g\|_{\mathcal{H}^{d-1}}^2 \geq 0$, where $g=(g_1,\dots,g_{d-1})$ is defined as \eqref{eq:def_g}.
If $p=q$, then $\textnormal{dKSD}^2(p,q)=0$ from the definition \eqref{eq:dksd} and Theorem \ref{thm:directional-stein}. 
Conversely, if $\textnormal{dKSD}^2(p,q)=0$, then $g=0$, namely $g_i=0$ for $i=1,\dots,\textnormal{d-1}$.
Then, from $\log (q/p) = \log (qJ) - \log (pJ)$, we obtain
\begin{align*}
    {\rm E}_{\tilde{x} \sim p} \left[ L_i(\tilde{x}) k(x,\tilde{x}) \right] = g_i(x) - {\rm E}_{\tilde{x} \sim p} \left[ \mathcal{A}_p k(x,\tilde{x})  \right] = 0,
\end{align*}
for every $x$.
Since $k$ is $C_0$-universal, it implies $L_i=0$ \cite[Theorem 4.2b]{carmeli2010vector}.
Therefore, $\log (q/p)$ is constant on $\Sp^{d-1}$.
Since both $p$ and $q$ are densities on $\Sp^{d-1}$ that integrate to one, we obtain $p=q$.

\end{proof}




To apply dKSD for goodness-of-fit testing, we need to choose an RKHS on $\Sp^{d-1}$ that satisfies the conditions in Theorem \ref{thm:characteristic}.
In this paper, we use the RKHS generated by the von-Mises Fisher kernel:
\[
k(x,\tilde{x}) = \exp (\kappa x^{\top} \tilde{x}), \quad x,\tilde{x} \in \Sp^{d-1},
\]
where $\kappa>0$ is a concentration parameter that has a similar role to the band-width parameter in the Gaussian kernel.
Since both $x$ and $\tilde{x}$ have unit norm, their inner product $x^{\top} \tilde{x}$ is equal to the cosine of their angular separation. 
We discuss the method to choose $\kappa$ in Section 5.3.
See \cite{gneiting2013strictly} for general discussion on RKHS on $\Sp^{d-1}$.

\section{GOODNESS-OF-FIT TESTING VIA dKSD} \label{sec:goodness-of-fit}
In this section, we develop goodness-of-fit testing procedures based on dKSD.
Suppose $x_1,\cdots,x_n \sim p$ and we test $H_0: p=q$ with significance level $\alpha$.

\subsection{Test with U-statistics}
From \eqref{eq:dKSD-population}, an unbiased estimate of ${\rm dKSD}^2(p,q)$ is obtained in the form of U-statistics \cite{lee90}:
\begin{equation}\label{eq:u-stats}
{\textnormal{dKSD}}_u^2(p,q)=\frac{1}{n(n-1)}\sum_{i\neq j}h_{q}(x_{i},x_{j}).
\end{equation}
From the U-statistics theory \cite{lee90}, the asymptotic distribution of $\textnormal{dKSD}_u^2(p,q)$ is explicitly obtained as follows. 
Here, $\overset{d}{\to}$ denotes the convergence in distribution.

\begin{Theorem}\label{thm:u-stat-asymptotic}
Under the conditions in Theorem \ref{thm:characteristic}, the following statements hold.
\begin{enumerate}
\item Under $H_0: p = q$, the asymptotic distribution of $\textnormal{dKSD}_u^2(p,q)$ is
\begin{equation}\label{eq:null-dist}
n \cdot \textnormal{dKSD}_u^2(p,q)\overset{d}{\to}\sum_{j=1}^{\infty} c_{j}(Z_{j}^{2}-1),
\end{equation}
where $Z_{j}$ are i.i.d. standard Gaussian random variables and $c_{j}$ are the eigenvalues of the kernel $h_{q}(x,\tilde{x})$ under $p(\tilde{x})$:
\[
\int h_{q}(x,\tilde{x})\phi_{j}(\tilde{x})p(\tilde{x}){\rm d}\tilde{x} = c_{j}\phi_{j}(x), \quad \phi_{j}(x) \neq 0.
\]

\item Under $H_1: p\neq q$, the asymptotic distribution of $\textnormal{dKSD}_u^2(p,q)$ is 
\[
\sqrt{n}(\textnormal{dKSD}_u^2(p,q) - \textnormal{dKSD}^2(p,q))\overset{d}{\to}\mathcal{N}(0,\sigma_u^{2}),
\]
where $\sigma_u^{2}=\mathrm{Var}_{x\sim p}[\E_{\tilde{x}\sim p}[h_{q}(x,\tilde{x})]]\neq0$. 
\end{enumerate}
\end{Theorem}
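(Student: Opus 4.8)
\section*{Proof proposal}

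The plan is to recognize $\textnormal{dKSD}_u^2(p,q)$ in \eqref{eq:u-stats} as a U-statistic of degree two with the symmetric kernel $h_q$, and then to read off both limits directly from the classical asymptotic theory of U-statistics \cite{lee90}. Since the reproducing kernel $k$ is symmetric, one checks that the four terms defining $h_q$ are invariant (the two cross terms swapping into one another) under $x\leftrightarrow\tilde x$, so $h_q(x,\tilde x)=h_q(\tilde x,x)$ and \eqref{eq:u-stats} is a genuine U-statistic estimating $\theta:=\textnormal{dKSD}^2(p,q)=\mathbb{E}_{x,\tilde x\sim p}[h_q(x,\tilde x)]$. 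The entire dichotomy between the two parts is governed by the first H\'ajek projection $h_1(x):=\mathbb{E}_{\tilde x\sim p}[h_q(x,\tilde x)]$ and by whether its variance $\zeta_1:=\mathrm{Var}_{x\sim p}[h_1(x)]$ vanishes: $\zeta_1=0$ yields the degenerate weighted--chi-square limit of part (1), while $\zeta_1>0$ yields the nondegenerate Gaussian limit of part (2).

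The key structural step is to show that $H_0$ forces first-order degeneracy. From the construction in the proof of Theorem \ref{th:dKSD-form}, $h_q$ is the reproducing kernel of the Stein features, i.e.\ $h_q(x,\tilde x)=\langle \xi_q(x),\xi_q(\tilde x)\rangle_{\mathcal{H}^{d-1}}$ with components $\xi_{q,i}(x)=\frac{\partial}{\partial\theta^i}\log(qJ)\,k(x,\cdot)+\frac{\partial}{\partial\theta^i}k(x,\cdot)\in\mathcal{H}$, so that $g_i=\mathbb{E}_{\tilde x\sim p}[\xi_{q,i}(\tilde x)]$ is exactly the witness \eqref{eq:def_g} of Theorem \ref{th:dKSD-form}. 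Consequently
\begin{equation*}
h_1(x)=\langle \xi_q(x),\,g\rangle_{\mathcal{H}^{d-1}},\qquad \mathbb{E}_p[h_1]=\|g\|_{\mathcal{H}^{d-1}}^2=\theta.
\end{equation*}
Under $H_0$, Theorem \ref{thm:directional-stein} gives $\mathbb{E}_p[\mathcal{A}_q f]=\langle f,g\rangle_{\mathcal{H}^{d-1}}=0$ for all $f$, hence $g=0$ and $h_1\equiv0$: the U-statistic is degenerate and $\theta=0$. Under $H_1$, Theorem \ref{thm:characteristic} gives $\theta=\|g\|_{\mathcal{H}^{d-1}}^2>0$, so $h_1$ is not $p$-a.s.\ zero.

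With the projection computed, both conclusions follow by invoking the matching limit theorem. For part (1), under $H_0$ the degeneracy $h_1\equiv0$ means $h_q$ is already conditionally centered, so the degenerate U-statistic theorem \cite{lee90} gives $n\cdot\textnormal{dKSD}_u^2(p,q)=n(U_n-\theta)\overset{d}{\to}\sum_{j}c_j(Z_j^2-1)$, where the $c_j$ are exactly the eigenvalues of the integral operator $\phi\mapsto\int h_q(\cdot,\tilde x)\phi(\tilde x)p(\tilde x)\,\mathrm{d}\tilde x$ on $L^2(p)$; because this operator is positive (the kernel is an inner product of Stein features), the finiteness conditions of Theorem \ref{thm:characteristic} make it trace class, so $\sum_j c_j<\infty$ and the limiting series converges a.s. For part (2), under $H_1$ we have $\zeta_1>0$, and the nondegenerate central limit theorem for U-statistics \cite{lee90} gives $\sqrt{n}(U_n-\theta)\overset{d}{\to}\mathcal{N}(0,\sigma_u^2)$, the asymptotic variance being determined through the first-order H\'ajek projection by $\zeta_1=\mathrm{Var}_{x\sim p}[\mathbb{E}_{\tilde x\sim p}[h_q(x,\tilde x)]]$, which is the $\sigma_u^2$ in the statement.

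I expect the main obstacle to be the regularity and nondegeneracy bookkeeping rather than the probabilistic core, which is entirely classical. Concretely, two points need care. First, verifying the integrability that legitimizes the spectral representation: one must bound the score factors $\frac{\partial}{\partial\theta^i}\log(qJ)$ against the kernel to obtain $h_q\in L^2(p\otimes p)$, which I would derive from the moment hypotheses of Theorem \ref{thm:characteristic} together with smoothness and the boundedness of the von Mises--Fisher kernel and its derivatives on the compact manifold $\Sp^{d-1}$. Second, establishing $\sigma_u^2\neq0$ under $H_1$: the subtle possibility to exclude is that $h_1=\langle\xi_q(\cdot),g\rangle_{\mathcal{H}^{d-1}}$ is $p$-a.s.\ equal to the constant $\theta>0$ despite $g\neq0$, and I would rule this out using $C_0$-universality of $k$, which makes the features $\{\xi_q(x)\}$ rich enough that $\langle\xi_q(\cdot),g\rangle_{\mathcal{H}^{d-1}}$ cannot be $p$-a.s.\ constant unless $g=0$. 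Compactness of $\Sp^{d-1}$ conveniently supplies the needed integrability throughout, so no decay-at-infinity arguments (as on $\R^d$) are required.
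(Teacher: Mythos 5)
Your proposal is correct and follows essentially the same route as the paper, which gives no proof of its own but defers to Theorem 4.1 of Liu et al.\ (2016); that proof is exactly the classical U-statistics dichotomy you describe, with the key observation that under $H_0$ the witness $g$ vanishes so the H\'ajek projection $h_1=\langle \xi_q(\cdot),g\rangle_{\mathcal{H}^{d-1}}$ is identically zero and the statistic is first-order degenerate. The only place you go beyond the paper is in sketching how to justify $\sigma_u^2\neq 0$ under $H_1$, which the paper (like Liu et al.) simply asserts as part of the statement.
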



The proof is essentially the same with Theorem 4.1 of \cite{liu2016kernelized}.
We employ Theorem \ref{thm:u-stat-asymptotic} for goodness-of-fit.
Namely, we generate bootstrap samples from an approximation of the null distribution \eqref{eq:null-dist} of $n \cdot \textnormal{dKSD}_u^2(p,q)$ and compare their $(1-\alpha)$ quantile with the realized value of $n \cdot \textnormal{dKSD}_u^2(p,q)$. 
To approximate the null, we truncate the infinite sum in \eqref{eq:null-dist} following \cite{gretton2009fast}: $\sum_{j=1}^{n} \hat{c}_{j}(Z_{j}^{2}-1)$, where $\hat{c}_j$ are eigenvalues of the $n \times n$ matrix ${H}$ with ${H}_{ij}=h(x_i,x_j)$ and $Z_1,\dots,Z_n$ are independent standard Gaussian random variables. 
The testing procedure is outlined in Algorithm \ref{alg:unbiased}.

\begin{algorithm}[tb]
   \caption{dKSD test via U-statistics (dKSDu)}
   \label{alg:unbiased}
\begin{algorithmic}[1]
\renewcommand{\algorithmicrequire}{\textbf{Input:}}
\renewcommand{\algorithmicensure}{\textbf{Objective:}}
\REQUIRE~~\\
    samples $x_1,\dots,x_n \sim p$\\
    null density $q$\\
    kernel function $k$\\
    test size $\alpha$\\
    bootstrap sample size $B$\\
\ENSURE~~
Test $H_0: p=q$ versus $H_1: p\neq q$.
\renewcommand{\algorithmicensure}{\textbf{Test procedure:}}
\ENSURE~~\\
\STATE Compute the U-statistics $\textnormal{dKSD}^2_u(p,q)$ via \eqref{eq:u-stats}.
\STATE Compute $n \times n$ matrix ${H}$ with ${H}_{ij}=h_q(x_i,x_j)$ and its eigenvalues $\hat{c}_1,\dots,\hat{c}_n$.
\FOR{$t=1:B$}
\STATE Sample $Z_1,\dots,Z_n \sim \mathcal{N}(0,1)$ independently.
\STATE Compute $S_t = \sum_{j=1}^n \hat{c}_j(Z_j^2-1)$.
\ENDFOR
\renewcommand{\algorithmicrequire}{\textbf{Output:}}
\STATE Determine the $(1-\alpha)$-quantile $\gamma_{1-\alpha}$ of $S_1,\dots,S_B$.
\REQUIRE~~\\
Reject $H_0$ if $n \cdot \textnormal{dKSD}^2_u(p,q) > \gamma_{1-\alpha}$; otherwise do not reject.
\end{algorithmic}
\end{algorithm}

\subsection{Wild Bootstrap Test with V-statistics}
Here, we propose another testing procedure with wild bootstrap adapted from \cite[Section 2.2]{chwialkowski2016kernel}, which is applicable even when observations $x_1,\dots,x_n \sim p$ are not independent. 
It is based on the V-statistics
\begin{equation}\label{eq:v-stats}
\textnormal{dKSD}_b^2(p,q)=\frac{1}{n^{2}}\sum_{i,j}h_{q}(x_{i},x_{j}).
\end{equation}
For each $t=1,\dots,B$, we sample uniform i.i.d. variables $U_1,\dots,U_n \sim \mathrm{U}[0,1]$, let $W_{0,t}=1$ and define
\begin{equation}\label{eq:bootstrap-weights}
W_{i,t} = \mathbbm{1}_{\{U_i > a_t\}}W_{i-1, t} - \mathbbm{1}_{\{U_i < a_t\}}W_{i-1, t},
\end{equation}
for $i=1,\dots,n$, where $a_t$ is the probability of sign change, which is set to $0.5$ when $x_1,\dots,x_n$ are independent. 

Then, wild bootstrap samples are given by
\begin{equation}\label{eq:bootstrap-null}
S_t = \frac{1}{n^2}\sum_{i,j} W_{i,t}W_{j,t}h(x_i,x_j), \quad t=1,\dots,n.
\end{equation}
We reject the null if the test statistic $\textnormal{dKSD}_b^2(p,q)$ in \eqref{eq:v-stats} exceeds the $(1-\alpha)$ quantile of $S_1,\dots,S_B$.
The testing procedure is outlined in Algorithm \ref{alg:wild}.

\begin{algorithm}[ht]
   \caption{dKSD test via wild bootstrap (dKSDv)}
   \label{alg:wild}
\begin{algorithmic}[1]
\renewcommand{\algorithmicrequire}{\textbf{Input:}}
\renewcommand{\algorithmicensure}{\textbf{Objective:}}
\REQUIRE~~\\
    samples $x_1,\dots,x_n \sim p$\\
    null density $q$\\
    kernel function $k$\\
    test size $\alpha$\\
    bootstrap sample size $B$\\
\ENSURE~~
Test $H_0: p=q$ versus $H_1: p\neq q$.
\renewcommand{\algorithmicensure}{\textbf{Test procedure:}}
\ENSURE~~\\
\STATE Compute the V-statistics ${\textnormal{dKSD}_b^2}(p,q)$ via \eqref{eq:v-stats}.
\FOR{$t=1:B$}
\STATE Sample $W_{1,t},\dots,W_{n,t}$ via \eqref{eq:bootstrap-weights}.
\STATE Compute $S_t$ by \eqref{eq:bootstrap-null}.
\ENDFOR
\renewcommand{\algorithmicrequire}{\textbf{Output:}}
\STATE Determine the $(1-\alpha)$-quantile $\gamma_{1-\alpha}$ of $S_1,\dots,S_B$.
\REQUIRE~~\\
Reject $H_0$ if ${\textnormal{dKSD}_b^2}(p,q) > \gamma_{1-\alpha}$; otherwise do not reject.
\end{algorithmic}
\end{algorithm}

\subsection{Kernel Choice}
In kernel-based testing, the performance is sensitive to the choice of kernel parameters such as the bandwidth parameter in Gaussian kernels \cite{gretton2012optimal}.
For the proposed dKSD tests with the von Mises-Fisher kernel $k(x,x') = \exp(\kappa x^{\top} x')$, the choice of concentration parameter $\kappa$ is crucial. 
Namely, if $\kappa$ is too small, the test magnifies any small difference between observed samples, and gives high type-I error. 
On the other hand, if $\kappa$ is too large, the test fails to detect the discrepancy between two different distributions.
Previous works \cite{chwialkowski2016kernel,gretton2012optimal,jitkrittum2018informative,jitkrittum2016interpretable,jitkrittum2017linear} proposed to choose the kernel parameter by maximizing the test power, which is defined as the probability of rejecting $H_0$ when it is false.
Here, we provide a method for choosing the kernel parameter by maximizing the test power of dKSDu.

We employ an approximation formula for the test power of dKSDu under $H_1: p \neq q$.
Since 
\[
D := \sqrt{n}\frac{ {\textnormal{dKSD}_u^2(p,q)}-\textnormal{dKSD}^2(p,q)}{\sigma_{u}}\overset{d}{\to}\mathcal{N}(0,1)
\]
from Theorem \ref{thm:u-stat-asymptotic}, we have
\begin{align*}
&\mathrm{Pr}_{H_1}(n \cdot {\textnormal{dKSD}_u^2(p,q)}>r) \\
=&\mathrm{Pr}_{H_1}\left( D>\frac{r}{\sqrt{n}\sigma_{H_1}}-\sqrt{n}\frac{\textnormal{dKSD}^2(p,q)}{\sigma_{u}} \right)\\
\approx& 1 - \Phi \left( \frac{r}{\sqrt{n}\sigma_{u}}-\sqrt{n}\frac{\textnormal{dKSD}^2(p,q)}{\sigma_{u}} \right), \label{eq:power}
\end{align*}
for large $n$ and fixed $r$, where $\Phi$ denotes the cumulative distribution function of the standard Gaussian distribution and $\sigma_{u}^2$ is defined in Theorem \ref{thm:u-stat-asymptotic}.
Following the argument in \cite{sutherland2016generative}, we use the approximation
\[
\frac{r}{\sqrt{n}\sigma_{u}}-\sqrt{n}\frac{\textnormal{dKSD}^2(p,q)}{\sigma_{u}} \approx -\sqrt{n}\frac{\textnormal{dKSD}^2(p,q)}{\sigma_{u}}.
\]
Thus, to maximize the test power, we choose $\kappa$ by
\[
\kappa^{\ast} = \argmax_{\kappa} \frac{\textnormal{dKSD}^2(p,q)}{\sigma_{u}}.
\]
In practice, we use part of the data to calculate ${{\textnormal{dKSD}_u^2}(p,q)}/{(\hat{\sigma}_{u}+\lambda)}$, where $\hat{\sigma}_{u}$ is an unbiased estimate of ${\sigma}_{u}$ and a regularization parameter $\lambda>0$ is added for numerical stability.
Then, we select $\kappa^{\ast}$ by grid search and apply the dKSD tests to the rest of the data.
In our experiments, this method had better testing performance than selecting the kernel parameter by the methods proposed in density estimation literature \cite{ garcia2013exact, garcia2013kernel,taylor2008automatic}.

\subsection{Test with Maximum Mean Discrepancy}

A proxy way to tackle the goodness-of-fit test on $\Sp^{d-1}$ is via the two-sample test with maximum mean discrepancy (MMD) \cite{gretton2007kernel}. 
Namely, to test whether $x_1,\dots,x_n$ is from density $q$, we draw samples $y_1,\dots,y_m$ from $q$ and determine whether $x_1,\dots,x_n$ and $y_1,\dots,y_m$ are from the same distribution. 
See \cite{gretton2007kernel} for details.
We compare the performance of the proposed dKSD tests with the MMD two-sample test in Section \ref{sec:experiment}.
Note that the MMD two-sample test requires to sample from the null distribution $q$, which can be computationally intensive for directional distributions especially in high dimension.
On the other hand, the proposed dKSD tests do not need samples from the null.



\section{EXPERIMENTAL RESULTS}\label{sec:experiment}

Here, we validate the proposed dKSD tests by simulation.
We employ the von Mises-Fisher kernel for both the dKSD tests and MMD two-sample test in Section 5.4.
The bootstrap sample size is set to $B=1000$.
The significance level is set to $\alpha = 0.01$.
In MMD two-sample test, we set $m=n$. 

\subsection{Circular Uniform Distribution}
First, we consider the circular ($d=2$) uniform distribution, for which several goodness-of-fit tests have been proposed such as Rayleigh test and Kuiper test \cite{mardia99}.
See Supplementary Material for details of Rayleigh test and Kuiper test.
We compare the proposed dKSD tests with these existing tests and MMD two-sample test.
We repeated 600 trials to calculate rejection rates. 

\begin{table}[ht]
\centering
\begin{tabular}{c|ccccc}
\toprule 
$n$ & Rayleigh & Kuiper & dKSDu & dKSDv & MMD \\
\midrule
30  &   0.006 & 0.010 & 0.011 & 0.007 & 0.013\\
50  &   0.015 & 0.011 & 0.015 & 0.015 & 0.016\\
100  &   0.010 & 0.011 & 0.008 & 0.011 & 0.030\\
200  &   0.015 & 0.018 & 0.010 & 0.015 & 0.013\\
\bottomrule
\end{tabular}
\caption{Type-I error of tests for the circular uniform distribution}
\label{tab:uni_size}
\end{table}

\begin{table}[ht]
\centering
\begin{tabular}{c|ccccc}
\toprule 
$n$ & Rayleigh & Kuiper & dKSDu  & dKSDv & MMD \\
\midrule
30  &   0.138 & 0.128& 0.560 & 0.338 & 0.133 \\
50  &   0.308& 0.267 & 0.750 & 0.898 & 0.317 \\
100  &  0.712& 0.667& 0.820 & 1.0  & 0.583 \\
200  &  0.980 & 0.962 & 0.900 & 1.0 & 0.900 \\
\bottomrule
\end{tabular}
\caption{Rejection rates for the circular uniform distribution under the von Mises distribution with $\kappa=0.5$}
\label{tab:uni_power}
\end{table}
\begin{table}[ht]
\centering
\begin{tabular}{c|ccccc}
\toprule 
$n$ & Rayleigh & Kuiper & dKSDu  & dKSDv  & MMD\\
\midrule
30  &   0.757& 0.731 & 0.650 & 0.831 & 0.600\\
50  &  0.957 & 0.940 & 0.750 & 1.0 &  0.833\\
100  &   1.0& 1.0&  0.833 &1.0 &  0.983\\
200  &   1.0 & 1.0 & 0.96 & 1.0 & 1.0\\
\bottomrule
\end{tabular}
\caption{Rejection rates for the circular uniform distribution under the von Mises distribution with $\kappa=1$}
\label{tab:uni_power2}
\end{table}

Table \ref{tab:uni_size} presents the rejection rate at the null.
The type-I errors of all tests are well controlled to the significance level $\alpha = 0.01$.

Tables \ref{tab:uni_power} and \ref{tab:uni_power2} present the rejection rate under the von Mises distribution with concentration $\kappa=0.5$ and $\kappa=1$, respectively.
The power of all tests increases with increasing $n$ or $\kappa$ and converges to one.
The dKSDv has the largest power.

\begin{figure*}[ht]
\label{fig:synthetic-problems}
	\begin{center}
        \subfigure[$d=3$, $\sigma=0$]
		{\includegraphics[width=0.25\textwidth]{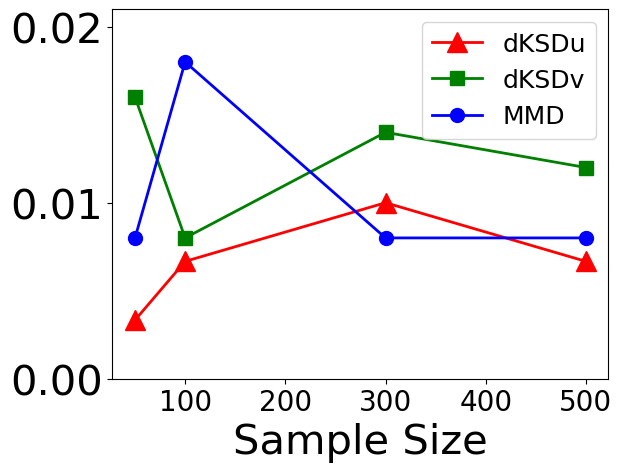}\label{fig:vmf-h0}}\subfigure[$d=3$, $\sigma=1$]
		{\includegraphics[width=0.24\textwidth]{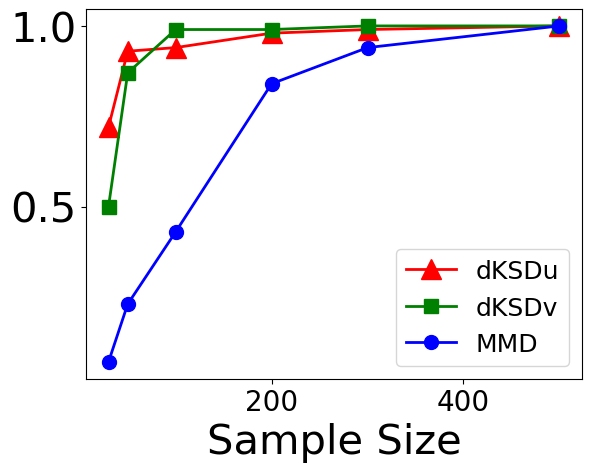}\label{fig:vmf-h1}}\subfigure[$d=3$, $n=200$]
		{\includegraphics[width=0.24\textwidth]{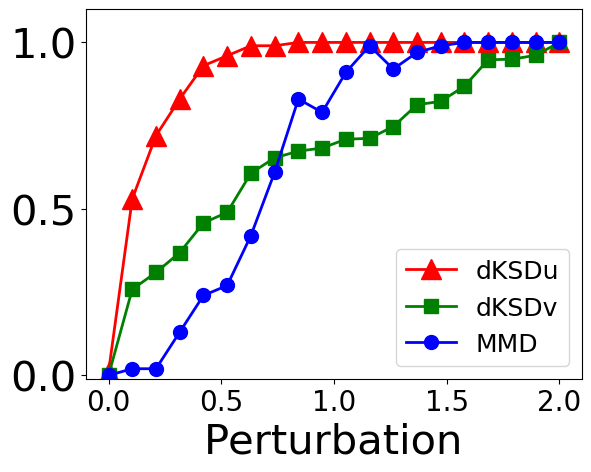}\label{fig:vmf-sig}}\subfigure[$n=200$, $\mu=d^{-1/2}\mathbf{1_d}$
		]
		{\includegraphics[width=0.25\textwidth]{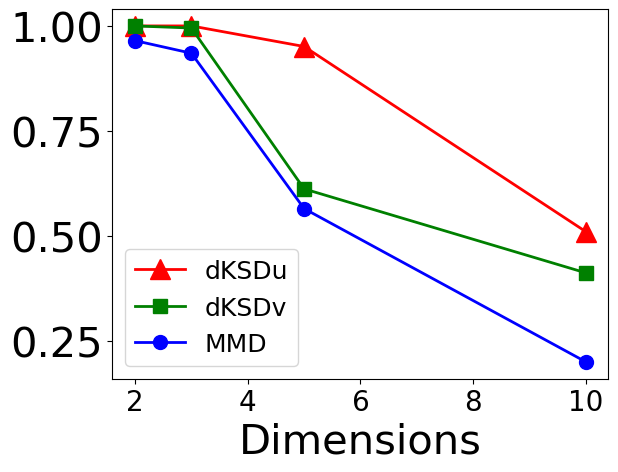}\label{fig:vmf-dim}}
		
		\subfigure[$d=3$, $\sigma=0$]
		{\includegraphics[width=0.25\textwidth]{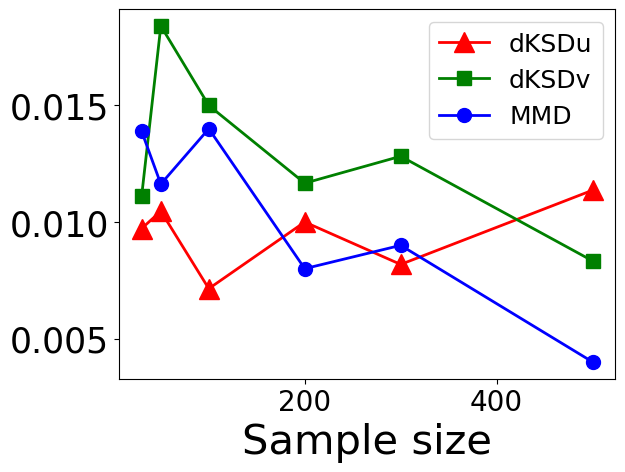}\label{fig:bingham-h0}}\subfigure[$d=3$, $\sigma=1$]
		{\includegraphics[width=0.24\textwidth]{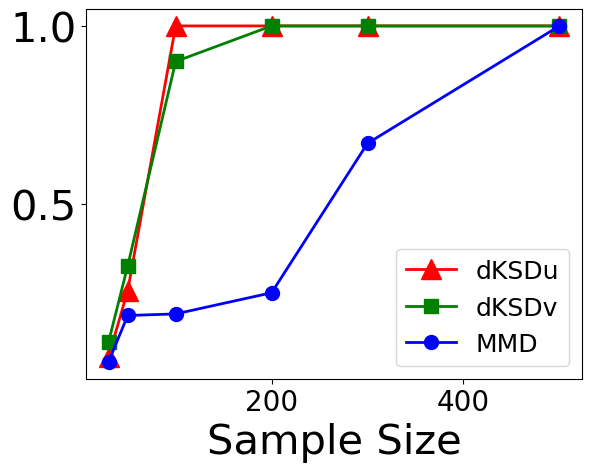}\label{fig:bingham-h1}}\subfigure[$d=3$, $n=200$]
		{\includegraphics[width=0.25\textwidth]{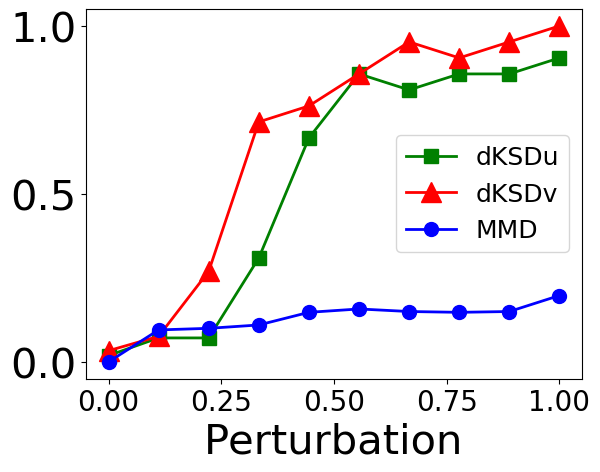}\label{fig:bingham-sig}}\subfigure[$n=200$, $\sigma=1$]
		{\includegraphics[width=0.25\textwidth]{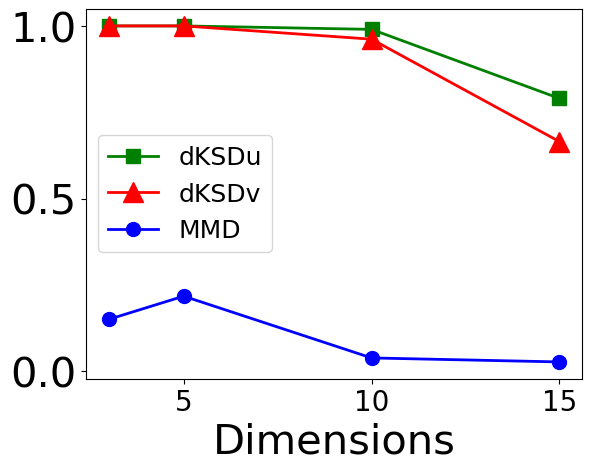}\label{fig:bingham-dim}}
		\caption{Rejection rates for (a)-(d) von Mises-Fisher; (e)-(h) Fisher-Bingham}		
	\end{center}
\end{figure*}

\subsection{von Mises-Fisher Distribution}

Next, we consider the von Mises-Fisher distribution ${\rm vMF}(\mu,\kappa)$ in \eqref{eq:vonMises}.
We compare the proposed dKSD tests with MMD two-sample test. 
We repeated 200 trials to calculate rejection rates. 

We set the null and alternative distributions to ${\rm vMF}(\mu_0,1)$ and ${\rm vMF}(\mu,1 + \sigma)$, respectively, where $\mu_0=(1,0,\dots,0) \in \Sp^{d-1}$, $\mu \in \Sp^{d-1}$ and $\sigma \geq 0$. 
We generated samples from the von Mises-Fisher distribution by using the methods proposed in \cite{jakob2012numerically,wood94}. 


Figure~\ref{fig:vmf-h0} plots the rejection rate under the null ($\mu=\mu_0$, $\sigma=0$) with respect to $n$ for $d=3$. 
The type-I errors of dKSD tests are well controlled to the significance level $\alpha=0.01$.   

Figure~\ref{fig:vmf-h1} plots the rejection rate with respect to $n$ for $d=3$, $\mu=\mu_0$ and $\sigma=1$. 
Both dKSDu and dKSDv have larger power than MMD two-sample test.

Figure~\ref{fig:vmf-sig} plots the rejection rate with respect to  $\sigma$ for $d=3$, $n=200$ and $\mu=\mu_0$. 
The dKSDu has the largest power and achieves almost 100\% power around $\kappa = 0.3$. 

Figure~\ref{fig:vmf-dim} plots the rejection rate with respect to $d$ for $n=200$, $\mu=(1/\sqrt{d})\mathbf{1_d}$ and $\sigma=0.5$, where $\mathbf{1_d}$ denotes the all one vector.
Although the test power decreases for higher dimension, dKSD tests have larger power than MMD two-sample test in all dimensions.

\subsection{Fisher-Bingham Distribution}
Finally, we consider the Fisher-Bingham distribution \eqref{eq:Fisher-Bingham}.
Here, we focus on the Fisher-Bingham distribution ${\rm FB}(A)$ that only includes second order terms:
\[
p({x} \mid A)\propto\exp({x}^{\top} A {x}), \quad x \in \Sp^{d-1},
\]
where $A\in \R^{d \times d}$ is symmetric. 
The normalization constant does not have closed form in general.
We compare the proposed dKSD tests with MMD two-sample test. 
We repeated 200 trials to calculate rejection rates. 

We set the null distribution to ${\rm FB}(A)$ with
\[
A_{ij} = \begin{cases} 2 & (i=j) \\ 1 & (i \neq j) \end{cases},
\]
and the alternative distribution to ${\rm FB}(A')$ with $A' = A + \sigma \textbf{1}_{d,d}$, where $\sigma \geq 0$ and $\textbf{1}_{d,d}$ denotes the $d \times d$ matrix with all entries one.
We generated samples from the Fisher-Bingham distribution via rejection sampling with angular central Gaussian proposals \cite{kent2013new, fallaize2016exact}. 


Figure~\ref{fig:bingham-h0} plots the rejection rate under the null ($\sigma=0$) with respect to $n$ for $d=3$.
The type-I errors of dKSD tests are approximately controlled to the significance level $\alpha=0.01$. 

Figure~\ref{fig:bingham-h1} plots the rejection rate with respect to $n$ for $d=3$ and $\sigma=1$.
The dKSD tests have larger power and achieve almost 100\% power around $n=100$.

Figure~\ref{fig:bingham-sig} plots the rejection rate with respect to $\sigma$ for $n=200$ and $d=3$.
Again, the dKSD tests have larger power and capture small perturbation. 

Figure~\ref{fig:bingham-dim} plots the rejection rate with respect to  $d$ for $n=200$ and $\sigma=1$. 
The dKSD tests attain almost 80\% power even when the dimension is as large as 15, whereas the power of the MMD two-sample test is smaller than 20\% for all dimensions.

Table~\ref{tab:bingham_time} presents the computational time for $d=3$.
The dKSD tests are more computationally efficient than MMD two-sample test.
The computational time of MMD two-sample test grows rapidly with the sample size $n$, because it requires to sample from the Fisher-Bingham distribution.

\begin{table}
\centering
\begin{tabular}{c|ccccc}
\toprule 
$n$ &  dKSDu  & dKSDv  & MMD\\
\midrule
30 & 0.005 & 0.009  &  0.091 \\
50 & 0.011 & 0.015 &  0.120  \\
100 & 0.027 & 0.030  &  0.180\\
200 &  0.096  & 0.105  &  0.379\\
300 & 0.227  & 0.238 & 0.704 \\
500 & 0.588 & 0.574 & 2.614  \\
\bottomrule
\end{tabular}
\caption{Computational time for Fisher-Bingham distribution (in seconds).}
\label{tab:bingham_time}
\end{table}



\section{CONCLUSION}\label{sec:conclusion}
In this study, we developed goodness-of-fit testing procedures for general directional distributions including unnormalized ones.
The proposed methods are based on an extension of Stein's operator and kernel Stein discrepancy.
Experimental results demonstrated that the proposed methods control type-I errors well and attain larger power than existing tests, without sampling from the null distribution.

Although we focused on the unit hypersphere $\Sp^{d-1}$ in this study, our derivation of Stein's operator and kernel Stein discrepancy is applicable to general manifolds as well.
It is an interesting future work to extend the proposed methods to general manifolds such as Stiefel manifolds and Grassmann manifolds.




\clearpage
\bibliographystyle{plain}
\bibliography{main}







\clearpage
\appendix

\section{Uniformity test} 
We present Rayleigh test and Kuiper test for uniformity. 

\subsection{Rayleigh Test}
The test statistic of Rayleigh test is 
$$
R_n := \frac{2}{n} \left[\left( \sum_{i=1}^{n}\cos \theta_i \right)^2 + \left( \sum_{i=1}^{n} \sin \theta_i \right)^2 \right].
$$
Under the null, we have $R_n \sim \chi_2^{2}$. 
Therefore, the critical value is given by the quantile of chi-square distribution.
For example, if the significance level is set to $\alpha = 0.01$, then the critical value is $9.210$.

\subsection{Kuiper Test}
Kuiper test for uniformity is based on the cumulative distribution function (cdf).
The cdf of the uniform distribution is 
\[
F(\theta) = \frac{\theta}{2\pi}.
\]
We sort the samples to $0 \leq \theta_1 \leq \cdots \leq \theta_n \leq 2 \pi$ and compute 
$$
D^{+}_n := \sqrt{n} \sup_{\theta \in [0,2\pi)}\{F_n(\theta) - F(\theta)\} = \sqrt{n} \max_{1\leq i \leq n} \left( \frac{i}{n} - U_i \right),
$$
$$
D^{-}_n := \sqrt{n} \sup_{\theta \in [0,2\pi)}\{F(\theta) - F_n(\theta)\} = \sqrt{n} \max_{1\leq i \leq n} \left(U_i - \frac{i-1}{n} \right),
$$
where $U_i =\theta_i/(2\pi)$.
Then, the test statistic is defined as 
$$ V_n:= D_n^{+} + D_{n}^{-}.$$
The critical value is found in the statistical table.
For example, for the significance level $\alpha=0.01$, the critical value is $2.001$.

\end{document}